\documentclass[conference]{IEEEtran}
\IEEEoverridecommandlockouts

\usepackage{amsmath,amssymb,bm,mathtools}
\usepackage{amsthm}
\usepackage{graphicx}
\usepackage{epstopdf}
\usepackage{booktabs}
\usepackage{cite}
\usepackage[bookmarks=false]{hyperref}
\usepackage{cleveref}
\usepackage{comment}
\graphicspath{{figs/}}
\usepackage{flushend}

\hypersetup{
    colorlinks = true,
    linkcolor = black,
    urlcolor = black,
    citecolor = black
}

\newtheoremstyle{resultcolon}{3pt}{3pt}{\normalfont}{}{\bfseries}{}{.5em}{\thmname{#1}\thmnumber{ #2}\thmnote{: #3}}
\newtheoremstyle{remarkcolon}{3pt}{3pt}{\normalfont}{}{\itshape}{:}{.5em}{}
\theoremstyle{resultcolon}
\newtheorem{proposition}{Proposition}

\theoremstyle{remarkcolon}
\newtheorem{remark}{Remark}

\title{Delay-Doppler Sensing Performance Analysis for MIMO-OFDM ISAC Systems \vspace{-0.1 cm}
}

\author{
    \IEEEauthorblockN{Peishi Li$^{\dag}$, Rang Liu$^{\ddag}$, Qian Liu$^{\dag}$, and Ming Li$^{\dag}$\\}
	\IEEEauthorblockA{$^{\dag}$ Dalian University of Technology, Dalian, Liaoning 116024, China \\ E-mail: \texttt{lipeishi@mail.dlut.edu.cn, \{mli, qianliu\}@dlut.edu.cn}}
	\IEEEauthorblockA{$^{\ddag}$ Friedrich-Alexander-University Erlangen-Nuremberg (FAU), Erlangen 91058, Germany \\ 
		E-mail: \texttt{rang.liu@fau.de}} \vspace{-0.4 cm}  }

\begin{document}

\maketitle
\thispagestyle{empty}
\pagestyle{empty}

\begin{abstract}
    In communication-centric integrated sensing and communication (ISAC), delay-Doppler sensing reuses data-bearing orthogonal frequency division multiplexing (OFDM) signals rather than dedicated radar probing waveforms. Consequently, the resulting range-Doppler map (RDM) is shaped not only by target parameters, but also by communication-symbol randomness and, in multi-antenna transmissions, by spatial beamforming. While existing analyses have largely focused on single-antenna OFDM-ISAC, the delay-Doppler sensing behavior of multi-antenna OFDM-ISAC remains insufficiently understood. This paper analyzes a multi-input multi-output (MIMO)-OFDM-ISAC system in which multiple data streams jointly illuminate a sensing target. We derive second-order moment expressions for the RDM under matched filtering (MF) and reciprocal filtering (RF), and use them to characterize the dynamic range (DR). The analysis reveals two key multi-stream effects. First, under MF, the random superposition of multiple beamformed streams creates an additional RDM floor beyond the modulation-dependent and receiver-noise terms; therefore, constant-modulus signaling no longer eliminates the data-induced pedestal as in single-antenna OFDM-ISAC. Second, under RF, the matched-angle data-induced floor is removed, but the noise floor is amplified according to the reciprocal-power statistics of the beamformed target illumination. These results show that the user-target angular geometry directly governs the MF/RF tradeoff: MF is more robust under weak illumination, whereas RF can achieve a higher DR when reciprocal-noise amplification is mild. Numerical results validate the analysis and demonstrate the distinct geometry-dependent behaviors of MF and RF.
\end{abstract}

\begin{IEEEkeywords}
    Integrated sensing and communication (ISAC), orthogonal frequency division multiplexing (OFDM), range-Doppler map, temporal-frequency filtering, dynamic range.
\end{IEEEkeywords}

\section{Introduction}

Integrated sensing and communication (ISAC) enables wireless networks to support data transmission and environmental sensing within a unified platform by sharing spectrum, hardware, and signal processing resources \cite{LiuFan_JSAC_2026,LiuRang_PROCIEEE_2026}. Among various ISAC architectures, communication-centric signaling is particularly attractive, as it directly reuses data-bearing communication waveforms for sensing and thus preserves compatibility with existing wireless standards. This reuse, however, makes the sensing waveform random rather than deterministic. In orthogonal frequency division multiplexing (OFDM)-based ISAC, the delay-Doppler response, or equivalently the range-Doppler map (RDM), is therefore affected not only by target parameters, but also by the statistics of the information-bearing symbols and the adopted temporal-frequency filtering operation \cite{LiuFan_JSAC_2026,LiuFan_TIT_2025}.

The impact of data randomness has recently been studied for single-antenna OFDM-ISAC, where constellation moments, pulse shaping, and filtering operations have been shown to determine the average sidelobe pedestal and delay-Doppler response \cite{LiuFan_TIT_2025,DuZhen_TWC_2026}. Related studies have further considered more general propagation conditions, such as targets with delays beyond the cyclic prefix (CP) duration \cite{LiPeishi_GLOBECOM_2025,LiPeishi_TSP_2026}. However, these results do not directly characterize multi-antenna communication-centric ISAC. In a multi-input multi-output (MIMO)-OFDM-ISAC system, the sensing target is illuminated by a beamformed superposition of multi-user data symbols. Hence, the effective sensing waveform along the target direction is jointly shaped by symbol randomness and spatial beamforming across antennas, users, and subcarriers. Existing multi-antenna ISAC studies mainly focus on joint target-parameter estimation, range-Doppler sidelobe suppression, and spatial-domain designs, such as beampattern shaping, transmit covariance optimization, hybrid beamforming, and communication-sensing tradeoff balancing \cite{XiaoZichao_TSP_2024,LiPeishi_TWC_2025,LiuFan_TWC_2018,LiuXiao_TSP_2020,Keskin_TWC_2025}. How such random beamformed OFDM signals behave after delay-Doppler processing remains insufficiently understood.

Motivated by this gap, this paper analyzes the delay-Doppler sensing performance of a MIMO-OFDM ISAC system. We derive range-Doppler map (RDM) second-order moment expressions under matched filtering (MF) and reciprocal filtering (RF), and characterize the dynamic range (DR). The analysis reveals that MF and RF respond to multi-stream random illumination in fundamentally different ways. Under MF, the RDM floor consists of multi-stream-induced, modulation-dependent, and receiver-noise contributions, so constant-modulus signaling no longer removes the data-induced pedestal as in single-antenna OFDM-ISAC. Under RF, the matched-angle data-induced floor vanishes, but the receiver noise is amplified by reciprocal-power moments of the beamformed illumination. These results show that the MF/RF tradeoff is strongly geometry-dependent: MF is more robust under weak or cancellation-prone illumination, whereas RF can achieve a higher DR when reciprocal-noise amplification is mild. Numerical results validate the theoretical expressions and demonstrate how user-target angular geometry shapes the distinct robustness behaviors of MF and RF.

\section{Signal Model and Delay-Doppler Processing}
We consider a downlink monostatic MIMO-OFDM ISAC transceiver, where an $N_{\mathrm{t}}$-antenna transmit array sends $K$ data-bearing streams over an OFDM frame with $N$ subcarriers and $M$ OFDM symbols. The same OFDM frame is reused for both downlink communication and target sensing, and no dedicated sensing signals are inserted.

On the $(n,m)$-th resource element, the frequency-domain transmit vector is modeled as
\begin{equation}
    \mathbf{x}_{n,m} = \mathbf{W}_n \mathbf{s}_{n,m},
\end{equation}
where $\mathbf{W}_n = [\mathbf{w}_{n,1}, \ldots, \mathbf{w}_{n,K}] \in \mathbb{C}^{N_{\mathrm{t}}\times K}$ denotes the transmit beamforming matrix on subcarrier $n$, and $\mathbf{s}_{n,m} = [s_{n,m,1}, \ldots, s_{n,m,K}]^T \in \mathbb{C}^{K}$ collects the $K$ data symbols transmitted on this resource element. The beamforming matrices are treated as deterministic over the considered OFDM frame, while the data symbols are assumed to be independent across users, subcarriers, and OFDM symbols, with $\mathbb{E}\{\mathbf{s}_{n,m}\}=\mathbf{0}$ and $\mathbb{E}\{\mathbf{s}_{n,m}\mathbf{s}_{n,m}^{H}\}=\mathbf{I}_K$.
Each symbol is proper, i.e., $\mathbb{E}\{s_{n,m,k}^{2}\}=0$, and has a common fourth-order moment $\mathbb{E}\{|s_{n,m,k}|^4\}=\mu_4$. The parameter $\mu_4$ captures modulation-dependent randomness under unit average-power normalization. In particular, constant-modulus constellations satisfy $\mu_4=1$, whereas circular Gaussian signaling gives $\mu_4=2$. These assumptions include standard proper QAM and PSK constellations \cite{LiuFan_TIT_2025}.

Let $\mathbf{a}(\theta)\in\mathbb{C}^{N_{\mathrm{t}}}$ denote the transmit steering vector toward angle $\theta$. The beamformed illumination along direction $\theta$ on the $(n,m)$-th resource element is defined as
\begin{equation}
    b_{n,m}(\theta) \triangleq \mathbf{a}^{H}(\theta)\mathbf{x}_{n,m}.
\end{equation}

We consider a single point target with angle $\theta_0$, delay $\tau_0$, Doppler frequency $f_{\text{d},0}$, and reflection coefficient $\alpha_0\sim\mathcal{CN}(0,\sigma_{\alpha}^{2})$. The echo is collected by a co-located single-antenna sensing receiver. This receive configuration is adopted to isolate the transmit-side random illumination effect induced by multi-user beamformed OFDM transmissions, without introducing additional receive-array beamforming effects. Assuming that the target round-trip delay does not exceed the CP duration, the received echo after CP removal and fast Fourier transform (FFT) is given by
\begin{equation}\label{eq:echo_model}
    y_{n,m} = \alpha_0 e^{-\jmath 2\pi n\Delta_f\tau_0} e^{\jmath 2\pi m f_{\mathrm{d},0}T_{\mathrm{s}}} b_{n,m}(\theta_0) + z_{n,m},
\end{equation}
where $z_{n,m}\sim\mathcal{CN}(0,\sigma_z^2)$ is the receiver noise, $\sigma_z^2=N_0\Delta_f$, $N_0$ denotes the noise power spectral density, and $T_{\mathrm{s}}$ is the OFDM symbol duration including the CP. Thus, although the receiver has a single antenna, the target echo still depends on the target angle through the transmit-side beamformed illumination.

Since the target echo in \eqref{eq:echo_model} is multiplied by the random factor $b_{n,m}(\theta_0)$, delay-Doppler processing should account for the data-induced fluctuation of the effective sensing waveform. In a monostatic communication-centric ISAC system, the sensing processor has access to the transmitted data symbols and beamforming matrices. However, the target angle is generally unknown a priori. We therefore introduce a general angular hypothesis $\theta$ and construct a temporal-frequency filtering $q_{n,m}(\theta)$ from the hypothesized beamformed illumination.

For a given angular hypothesis $\theta$, the RDM obtained with $q_{n,m}(\theta)$ is
\begin{equation}\label{eq:general_rdm}
    \chi_{\theta}(l,\nu) = \frac{1}{\sqrt{MN}} \sum_{m=0}^{M-1} \sum_{n=0}^{N-1} y_{n,m}q_{n,m}(\theta) e^{\jmath \frac{2\pi}{N}nl} e^{-\jmath \frac{2\pi}{M}m\nu}.
\end{equation}
This paper considers two representative filters. The MF $q_{n,m}^{\mathrm{MF}}(\theta)=b_{n,m}^{\ast}(\theta)$ matches the received echo to the hypothesized beamformed illumination, whereas the RF $q_{n,m}^{\mathrm{RF}}(\theta)={1}/{b_{n,m}(\theta)}$ normalizes the received echo by the hypothesized illumination.

The difference between MF and RF can be seen by substituting \eqref{eq:echo_model} into \eqref{eq:general_rdm}. MF yields the target factor $b_{n,m}(\theta_0)b_{n,m}^{\ast}(\theta)$, which reduces to the random illumination power $|b_{n,m}(\theta_0)|^2$ at the matched angle. Hence, MF preserves illumination fluctuations in the RDM and may create a signal-induced floor. By contrast, RF yields the target factor $b_{n,m}(\theta_0)/b_{n,m}(\theta)$, which becomes one at the matched angle and therefore cancels the data-induced target fluctuation. The price is that the receiver noise is scaled by $1/b_{n,m}(\theta_0)$ at the matched angle, making RF sensitive to the reciprocal-power statistics of the beamformed illumination. Thus, MF avoids reciprocal noise amplification but retains a residual signal-induced floor, whereas RF removes the matched-angle target fluctuation at the cost of possible noise enhancement. The following section quantifies these effects through the RDM second-order moments for a general angular hypothesis $\theta$.

\section{Second-Order RDM Moment Analysis}
For a given angular hypothesis $\theta$, the RDM output $\chi_{\theta}(l,\nu)$ is a complex random variable determined by the target coefficient, the receiver noise, and the random beamformed data symbols. Since $\alpha_0$ and $z_{n,m}$ are zero mean and independent of the transmitted symbols, both MF and RF yield $\mathbb{E}\{\chi_{\theta}(l,\nu)\} = 0$. Hence, the delay-Doppler sensing behavior is characterized through the second-order moments derived below.

\subsection{MF Processing}
Under MF with angular hypothesis $\theta$, \eqref{eq:general_rdm} can be decomposed as
\begin{equation}\label{eq:mf_decomp}
    \chi_{\theta}^{\mathrm{MF}}(l,\nu) = \alpha_0 \Gamma_{\theta}^{\mathrm{MF}}(l,\nu) + w_{\theta}^{\mathrm{MF}}(l,\nu),
\end{equation}
where
\begin{subequations}
    \begin{align}
        \Gamma_{\theta}^{\mathrm{MF}}(l,\nu)     & = \frac{1}{\sqrt{MN}} \sum_{m=0}^{M-1} \sum_{n=0}^{N-1} u_{n,m}(\theta_0,\theta) e^{\jmath \phi_{n,m}},                                                                 \\
        \!\!\!\! w_{\theta}^{\mathrm{MF}}(l,\nu) & \!= \!\frac{1}{\sqrt{MN}}\!\! \sum_{m=0}^{M-1}\! \sum_{n=0}^{N-1}\!\! z_{n,m} b_{n,m}^{\ast}\!(\theta) e^{\jmath \frac{2\pi}{N}nl} e^{-\jmath \frac{2\pi}{M}m\nu}, \!\!
    \end{align}
\end{subequations}
and $\phi_{n,m} = \frac{2\pi}{N}n(l-\tilde{l}_0) + \frac{2\pi}{M}m(\tilde{\nu}_0-\nu)$, $\tilde{l}_0 = \tau_0 N\Delta_f$ denotes the normalized delay, $\tilde{\nu}_0 \triangleq f_{\mathrm{d},0}MT_{\mathrm{s}}$ denotes the normalized Doppler, and $u_{n,m}(\theta_0,\theta) = b_{n,m}(\theta_0) b_{n,m}^{\ast}(\theta)$.

To make the angular dependence explicit, define $\beta_{n,k}(\theta) \triangleq \mathbf{a}^{H}(\theta) \mathbf{w}_{n,k}$ so that $b_{n,m}(\theta) = \sum_{k=1}^{K} \beta_{n,k}(\theta) s_{n,m,k}$.
We further define
\begin{subequations}
    \label{eq:mf_params}
    \begin{align}
        \rho_n(\theta_0,\theta)   & \triangleq \mathbb{E}\bigl\{u_{n,m}(\theta_0,\theta)\bigr\} = \sum_{k=1}^{K} \beta_{n,k}(\theta_0) \beta_{n,k}^{\ast}(\theta), \\
        \kappa_n(\theta_0,\theta) & \triangleq \sum_{k=1}^{K} |\beta_{n,k}(\theta_0)|^2 |\beta_{n,k}(\theta)|^2.
    \end{align}
\end{subequations}
For notational simplicity, let $\rho_n(\theta) = \rho_n(\theta,\theta)$, $\kappa_n(\theta) = \kappa_n(\theta,\theta)$.
With the aid of these definitions, the following proposition quantifies the MF-based RDM second moment by decomposing it into a coherent delay-Doppler response and three delay-Doppler-independent background contributions induced by multi-stream superposition, modulation randomness, and receiver noise, respectively.

\begin{proposition} \label{prop:mf_rdm_second}
    The MF-based RDM second moment under an arbitrary angular hypothesis $\theta$ is
    \begin{equation}\label{eq:mf_second}
        \begin{aligned}
             & \mathbb{E}\bigl\{|\chi_{\theta}^{\mathrm{MF}}\!(l,\nu)|^2 \!\bigr\} \!=\! \frac{\sigma_{\alpha}^{2}}{MN} \biggl| \sum_{n=0}^{N-1}\!\! \rho_n\!(\theta_0,\! \theta) e^{\jmath \frac{2\pi}{N} n (l - \tilde{l}_0)} \biggr|^2 \!\! \bigl|D_{M} (\tilde{\nu}_0 \!-\! \nu)\bigr|^2 \\
             & \hspace{1.5cm} + P_{\mathrm{stream}}^{\mathrm{MF}}(\theta_0,\theta) + P_{\mathrm{mod}}^{\mathrm{MF}}(\theta_0,\theta) + P_{\mathrm{noise}}^{\mathrm{MF}}(\theta),
        \end{aligned}
    \end{equation}
    where the three background floor contributions are defined as
    \begin{subequations}\label{eq:mf_terms}
        \begin{align}
            P_{\mathrm{stream}}^{\mathrm{MF}}(\theta_0,\theta) & \triangleq \frac{\sigma_{\alpha}^{2}}{N} \sum_{n=0}^{N-1} \Bigl( \rho_n(\theta_0)\rho_n(\theta) - \kappa_n(\theta_0,\theta) \Bigr), \label{eq:mf_stream} \\
            P_{\mathrm{mod}}^{\mathrm{MF}}(\theta_0,\theta)    & \triangleq \frac{\sigma_{\alpha}^{2} (\mu_4 - 1)}{N} \sum_{n=0}^{N-1} \kappa_n(\theta_0,\theta), \label{eq:mf_mod}                                       \\
            P_{\mathrm{noise}}^{\mathrm{MF}}(\theta)           & \triangleq \frac{\sigma_z^2}{N} \sum_{n=0}^{N-1} \rho_n(\theta). \label{eq:mf_noise_contribution}
        \end{align}
    \end{subequations}
\end{proposition}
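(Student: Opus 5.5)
We start from the decomposition \eqref{eq:mf_decomp}. Since $\alpha_0$, the noise $\{z_{n,m}\}$ and the data symbols are mutually independent, and both $\alpha_0$ and $z_{n,m}$ are zero mean, the cross term $\mathbb{E}\{\alpha_0 \Gamma_{\theta}^{\mathrm{MF}} (w_{\theta}^{\mathrm{MF}})^{\ast}\}$ vanishes. This gives
\begin{equation*}
\mathbb{E}\{|\chi_{\theta}^{\mathrm{MF}}|^2\}=\sigma_\alpha^2\,\mathbb{E}\{|\Gamma_{\theta}^{\mathrm{MF}}|^2\}+\mathbb{E}\{|w_{\theta}^{\mathrm{MF}}|^2\}.
\end{equation*}
The noise term is the easy one. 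We condition on the symbols and use $\mathbb{E}\{z_{n,m}z_{n',m'}^{\ast}\}=\sigma_z^2\delta_{nn'}\delta_{mm'}$, which leaves $\frac{\sigma_z^2}{MN}\sum_{n,m}\mathbb{E}\{|b_{n,m}(\theta)|^2\}$. Because the symbols have identity covariance, $\mathbb{E}\{|b_{n,m}(\theta)|^2\}=\sum_k|\beta_{n,k}(\theta)|^2=\rho_n(\theta)$. Averaging over $m$ then yields $P_{\mathrm{noise}}^{\mathrm{MF}}(\theta)$.

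For the target term, we expand
\begin{equation*}
\mathbb{E}\{|\Gamma_\theta^{\mathrm{MF}}|^2\}=\frac{1}{MN}\sum_{n,m}\sum_{n',m'}\mathbb{E}\{u_{n,m}u_{n',m'}^{\ast}\}\,e^{\jmath(\phi_{n,m}-\phi_{n',m'})}.
\end{equation*}
The symbols are independent across resource elements, so for $(n,m)\neq(n',m')$ the expectation factorizes into $\rho_n(\theta_0,\theta)\rho_{n'}^{\ast}(\theta_0,\theta)$. The standard trick is to write
\begin{equation*}
\mathbb{E}\{u_{n,m}u_{n',m'}^{\ast}\}=\rho_n\rho_{n'}^{\ast}+\delta_{nn'}\delta_{mm'}\,\mathrm{Var}(u_{n,m}).
\end{equation*}
The first part, summed over all index pairs, gives the squared modulus of $\sum_{n,m}\rho_n e^{\jmath\phi_{n,m}}$. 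The $m$-sum factors out because $\rho_n$ does not depend on $m$, which produces $|\sum_n\rho_n e^{\jmath\frac{2\pi}{N}n(l-\tilde l_0)}|^2|D_M(\tilde\nu_0-\nu)|^2$ with $D_M$ the Dirichlet-type kernel $\sum_m e^{\jmath\frac{2\pi}{M}m(\tilde\nu_0-\nu)}$. We assume the paper's normalization of $D_M$ matches the $1/(MN)$ prefactor, and adjust if not. The diagonal part contributes $\frac{\sigma_\alpha^2}{MN}\sum_{n,m}\mathrm{Var}(u_{n,m})=\frac{\sigma_\alpha^2}{N}\sum_n\bigl(\mathbb{E}|u_{n,m}|^2-|\rho_n(\theta_0,\theta)|^2\bigr)$. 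This term is delay-Doppler independent.

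The main step, and the only real obstacle, is the fourth-order moment
\begin{equation*}
\mathbb{E}\{|b_{n,m}(\theta_0)|^2|b_{n,m}(\theta)|^2\}=\sum_{k_1,k_2,k_3,k_4}\beta_{n,k_1}(\theta_0)\beta_{n,k_2}^{\ast}(\theta_0)\beta_{n,k_3}^{\ast}(\theta)\beta_{n,k_4}(\theta)\,\mathbb{E}\{s_{k_1}s_{k_2}^{\ast}s_{k_3}^{\ast}s_{k_4}\}.
\end{equation*}
We will evaluate this using independence, zero mean, properness ($\mathbb{E}\{s^2\}=0$), unit power and $\mathbb{E}|s|^4=\mu_4$. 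The symbol moment is nonzero only in three cases:
\begin{itemize}
\item $k_1=k_2\neq k_3=k_4$, with value $1$;
\item $k_1=k_3\neq k_2=k_4$, with value $1$;
\item all four indices equal, with value $\mu_4$.
\end{itemize}
The pairing $k_1=k_4,k_2=k_3$ would require $\mathbb{E}\{s^2\}\mathbb{E}\{s^{\ast2}\}$ and therefore vanishes by properness. Completing each restricted sum by adding and subtracting the $k$-diagonal gives two contributions. The first pairing gives $\rho_n(\theta_0)\rho_n(\theta)-\kappa_n$. The second gives $|\rho_n(\theta_0,\theta)|^2-\kappa_n$. The diagonal gives $\mu_4\kappa_n$. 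Subtracting $|\rho_n(\theta_0,\theta)|^2$ then leaves $\rho_n(\theta_0)\rho_n(\theta)-\kappa_n+(\mu_4-1)\kappa_n$. Multiplying by $\sigma_\alpha^2/N$ and summing over $n$ yields exactly $P_{\mathrm{stream}}^{\mathrm{MF}}+P_{\mathrm{mod}}^{\mathrm{MF}}$, which completes the decomposition \eqref{eq:mf_second}. Care is needed only to track conjugations correctly so that the second pairing indeed reproduces $|\rho_n(\theta_0,\theta)|^2$ and cancels against the squared mean.
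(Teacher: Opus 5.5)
Your proof is correct and takes essentially the same route as the paper. Both split $\mathbb{E}\{|\Gamma_{\theta}^{\mathrm{MF}}(l,\nu)|^2\}$ into a coherent squared-mean term and a per-resource-element variance term, and then evaluate $\mathbb{E}\{|u_{n,m}(\theta_0,\theta)|^2\}=\rho_n(\theta_0)\rho_n(\theta)+|\rho_n(\theta_0,\theta)|^2+(\mu_4-2)\kappa_n(\theta_0,\theta)$. You obtain this moment by enumerating the nonvanishing index pairings, whereas the paper quotes the equivalent compact identity $\mathbb{E}\{s_i s_j^{\ast} s_k s_l^{\ast}\}=\delta_{ij}\delta_{kl}+\delta_{il}\delta_{jk}+(\mu_4-2)\delta_{ij}\delta_{jk}\delta_{kl}$. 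Your kernel also matches the paper's definition $D_M(x)=\sum_{m=0}^{M-1}e^{\jmath\frac{2\pi}{M}mx}$, so the normalization adjustment you hedged on is not needed.
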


\noindent \emph{Proof:}
By expanding $|\Gamma_{\theta}^{\mathrm{MF}}(l,\nu)|^2$ and separating the diagonal and off-diagonal terms, we obtain
\begin{equation}\label{eq:mf_gamma_step1}
    \begin{aligned}
         & \mathbb{E}\bigl\{|\Gamma_{\theta}^{\mathrm{MF}}(l,\nu)|^2\bigr\} = \frac{1}{MN} \biggl| \sum_{m,n} \mathbb{E}\bigl\{u_{n,m}(\theta_0,\theta)\bigr\} e^{\jmath \phi_{n,m}} \biggr|^2 \\
         & \qquad + \frac{1}{MN} \sum_{m,n} \Bigl( \mathbb{E}\bigl\{|u_{n,m}(\theta_0,\theta)|^2\bigr\} - |\rho_n(\theta_0,\theta)|^2 \Bigr).
    \end{aligned}
\end{equation}
The first term represents coherent delay-Doppler accumulation, while the second term characterizes the stochastic pedestal induced by the random data symbols.
To evaluate the latter, using the fourth-order identity
\begin{equation}
    \mathbb{E}\{s_i s_j^{\ast} s_k s_l^{\ast}\} = \delta_{ij} \delta_{kl} + \delta_{il} \delta_{jk} + (\mu_4 - 2) \delta_{ij} \delta_{jk} \delta_{kl},
\end{equation}
we obtain
\begin{equation}\label{eq:mf_fourth}
    \begin{aligned}
        \mathbb{E}\bigl\{|u_{n,m}(\theta_0,\theta)|^2\bigr\}
         & = \rho_n(\theta_0)\rho_n(\theta) + |\rho_n(\theta_0,\theta)|^2 \\
         & \quad + (\mu_4 - 2) \kappa_n(\theta_0,\theta).
    \end{aligned}
\end{equation}
Substituting this expression and using the fact that $\rho_n(\theta_0,\theta)$ is invariant over the OFDM symbol index $m$, we obtain
\begin{equation}\label{eq:mf_gamma_second}
    \begin{aligned}
         & \mathbb{E}\bigl\{ \! |\Gamma_{\theta}^{\mathrm{MF}}\!(l,\!\nu)|^2 \!\bigr\} \!=\! \frac{1}{MN} \! \biggl| \sum_{n}\! \rho_n(\theta_0, \!\theta) e^{\jmath \frac{2\pi}{N} n (l - \tilde{l}_0)} \biggr|^2 \! \bigl|D_{M} \!(\tilde{\nu}_0 \!-\! \nu)\bigr|^2 \\
         & ~ + \!\frac{1}{N}\! \sum_{n=0}^{N-1}\! \Bigl( \rho_n(\theta_0)\rho_n(\theta) \!-\! \kappa_n(\theta_0,\theta) \Bigr) + \frac{\mu_4 \!-\! 1}{N} \!\sum_{n=0}^{N-1}\! \kappa_n(\theta_0,\theta),
    \end{aligned}
\end{equation}
where $D_M(x) \triangleq \sum_{m=0}^{M-1} e^{\jmath \frac{2\pi}{M} m x}$ is the Dirichlet kernel. Combining \eqref{eq:mf_decomp} and \eqref{eq:mf_gamma_second}, and using $\mathbb{E}\{|w_{\theta}^{\mathrm{MF}}(l,\nu)|^2\}=P_{\mathrm{noise}}^{\mathrm{MF}}(\theta)$, gives \eqref{eq:mf_second} and \eqref{eq:mf_terms}, completing the proof. \hfill $\blacksquare$
\smallskip

Proposition \ref{prop:mf_rdm_second} shows that the MF-based RDM contains a coherent mainlobe and three background contributions. The term $P_{\mathrm{stream}}^{\mathrm{MF}}(\theta_0,\theta)$ originates from the random superposition of multiple beamformed streams and remains present even for constant-modulus signaling. The term $P_{\mathrm{mod}}^{\mathrm{MF}}(\theta_0,\theta)$ depends on the fourth-order constellation moment and vanishes for constant-modulus constellations. The term $P_{\mathrm{noise}}^{\mathrm{MF}}(\theta)$ is the receiver-noise floor after MF weighting.

\subsection{RF Processing}
For RF under an angular hypothesis $\theta$, \eqref{eq:general_rdm} can be written as
\begin{equation}\label{eq:rf_decomp}
    \chi_{\theta}^{\mathrm{RF}}(l,\nu) = \alpha_0 \Gamma_{\theta}^{\mathrm{RF}}(l,\nu) + w_{\theta}^{\mathrm{RF}}(l,\nu),
\end{equation}
where
\begin{subequations}
    \begin{align}
        \Gamma_{\theta}^{\mathrm{RF}}(l,\nu) & = \frac{1}{\sqrt{MN}} \sum_{m=0}^{M-1} \sum_{n=0}^{N-1} \frac{b_{n,m}(\theta_0)}{b_{n,m}(\theta)} e^{\jmath \phi_{n,m}}, \label{eq:rf_gamma}                  \\
        w_{\theta}^{\mathrm{RF}}(l,\nu)      & = \frac{1}{\sqrt{MN}} \sum_{m=0}^{M-1} \sum_{n=0}^{N-1} \! \frac{z_{n,m}}{b_{n,m}(\theta)} e^{\jmath \frac{2\pi}{N}nl} e^{-\jmath \frac{2\pi}{M}m\nu}. \!\!\!
    \end{align}
\end{subequations}

For unregularized RF, we assume that the reciprocal terms are well defined and have finite second-order moments. Define $\mathbf{g}_n(\theta) \triangleq \mathbf{W}_n^{H} \mathbf{a}(\theta)$, so that $b_{n,m}(\theta) = \mathbf{g}_n^{H}(\theta) \mathbf{s}_{n,m}$. Moreover, define
\begin{equation}
    r_{n,m}(\theta_0,\theta) \triangleq \frac{b_{n,m}(\theta_0)}{b_{n,m}(\theta)} = \frac{\mathbf{g}_n^{H}(\theta_0)\mathbf{s}_{n,m}}{\mathbf{g}_n^{H}(\theta)\mathbf{s}_{n,m}}.
\end{equation}
To make the RF moments explicit, for discrete constellations let $p(\mathbf{s})$ denote the probability mass function of $\mathbf{s} \in \mathcal{S}^{K}$. The exact ratio moments are given by
\begin{subequations}
    \begin{align}
        \zeta_n(\theta_0,\theta) & \triangleq \mathbb{E}\bigl\{r_{n,m}(\theta_0,\theta)\bigr\} =  \sum_{\mathbf{s} \in \mathcal{S}^{K}} p(\mathbf{s}) \frac{\mathbf{g}_n^{H}(\theta_0)\mathbf{s}}{\mathbf{g}_n^{H}(\theta)\mathbf{s}}, \label{eq:zeta_discrete}                         \\
        \psi_n(\theta_0,\theta)  & \triangleq \mathbb{E}\bigl\{|r_{n,m}(\theta_0,\theta)|^2\bigr\} \!=\!\! \sum_{\mathbf{s} \in \mathcal{S}^{K}} \! p(\mathbf{s}) \frac{|\mathbf{g}_n^{H}(\theta_0)\mathbf{s}|^2}{|\mathbf{g}_n^{H}(\theta)\mathbf{s}|^2}. \!\! \label{eq:psi_discrete}
    \end{align}
\end{subequations}
Similarly, we define the reciprocal-power moment of the beamformed illumination as
\begin{equation}\label{eq:xi_discrete}
    \xi_n(\theta) \triangleq \mathbb{E}\biggl\{\frac{1}{|b_{n,m}(\theta)|^2}\biggr\} = \sum_{\mathbf{s} \in \mathcal{S}^{K}} p(\mathbf{s}) \frac{1}{|\mathbf{g}_n^{H}(\theta)\mathbf{s}|^2},
\end{equation}
provided that $\xi_n(\theta)<\infty$, this condition ensures that the unregularized reciprocal filter does not encounter zero-illumination events for the considered discrete constellation. Then, the RF-based RDM second moment can be derived in the following proposition.

\begin{proposition} \label{prop:rf_rdm_second}
    The RF-based RDM second moment under an arbitrary angular hypothesis $\theta$ is
    \begin{equation}\label{eq:rf_second}
        \begin{aligned}
            \mathbb{E}\bigl\{|\chi_{\theta}^{\mathrm{RF}}(l,\nu)|^2 \!\bigr\}
             & \!=\! \frac{\sigma_{\alpha}^{2}}{MN} \! \biggl| \sum_{n=0}^{N-1} \!\zeta_n(\theta_0,\!\theta) e^{\jmath \frac{2\pi}{N} n (l - \tilde{l}_0)} \!\biggr|^2 \! \bigl|D_M(\tilde{\nu}_0 \!-\! \nu)\bigr|^2 \\
             & \qquad + P_{\mathrm{var}}^{\mathrm{RF}}(\theta_0,\theta) + P_{\mathrm{noise}}^{\mathrm{RF}}(\theta),
        \end{aligned}
    \end{equation}
    where the two background floor contributions are defined as
    \begin{subequations}\label{eq:rf_terms}
        \begin{align}
            P_{\mathrm{var}}^{\mathrm{RF}}(\theta_0,\theta) & \triangleq \frac{\sigma_{\alpha}^{2}}{N} \sum_{n=0}^{N-1} \Bigl( \psi_n(\theta_0,\theta) - |\zeta_n(\theta_0,\theta)|^2 \Bigr), \label{eq:rf_var} \\
            P_{\mathrm{noise}}^{\mathrm{RF}}(\theta)        & \triangleq \frac{\sigma_z^2}{N} \sum_{n=0}^{N-1} \xi_n(\theta). \label{eq:rf_noise_contribution}
        \end{align}
    \end{subequations}
\end{proposition}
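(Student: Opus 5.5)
We mirror the argument used for Proposition~\ref{prop:mf_rdm_second}, starting from the decomposition \eqref{eq:rf_decomp}. Since $\alpha_0$, $\{z_{n,m}\}$ and $\{\mathbf{s}_{n,m}\}$ are mutually independent and $\alpha_0$, $z_{n,m}$ are zero mean, the cross term $\mathbb{E}\{\alpha_0^{\ast}\Gamma_{\theta}^{\mathrm{RF}\ast}(l,\nu)\,w_{\theta}^{\mathrm{RF}}(l,\nu)\}$ vanishes. We first condition on the symbols, which leaves $\mathbb{E}\{z_{n,m}\}=0$, and then average over the symbols; the finite-moment assumption ($\xi_n(\theta)<\infty$ and finite $\psi_n$) makes all of these expectations well defined. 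This gives
\begin{equation*}
\mathbb{E}\{|\chi_{\theta}^{\mathrm{RF}}(l,\nu)|^2\}=\sigma_\alpha^2\,\mathbb{E}\{|\Gamma_{\theta}^{\mathrm{RF}}(l,\nu)|^2\}+\mathbb{E}\{|w_{\theta}^{\mathrm{RF}}(l,\nu)|^2\},
\end{equation*}
and it remains to evaluate the two terms separately.

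\textbf{Target term.} We expand $|\Gamma_{\theta}^{\mathrm{RF}}(l,\nu)|^2$ as a double sum over $(n,m)$ and $(n',m')$. The key observation is that $r_{n,m}(\theta_0,\theta)$ depends only on $\mathbf{s}_{n,m}$. Because the symbol vectors are independent across resource elements, $\mathbb{E}\{r_{n,m}r_{n',m'}^{\ast}\}=\zeta_n\zeta_{n'}^{\ast}$ for $(n,m)\neq(n',m')$, while the diagonal terms equal $\psi_n$. Adding and subtracting $|\zeta_n|^2$ on the diagonal yields the analogue of \eqref{eq:mf_gamma_step1}:
\begin{equation*}
\mathbb{E}\{|\Gamma_{\theta}^{\mathrm{RF}}|^2\}=\frac{1}{MN}\Bigl|\sum_{m,n}\zeta_n e^{\jmath\phi_{n,m}}\Bigr|^2+\frac{1}{MN}\sum_{m,n}\bigl(\psi_n-|\zeta_n|^2\bigr).
\end{equation*}
The symbols are i.i.d.\ over $m$ and the beamformers are fixed over the frame, so $\zeta_n$ and $\psi_n$ do not depend on $m$. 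The phase $\phi_{n,m}$ separates into an $n$-part and an $m$-part, so the $m$-sum in the first term factors out as $D_M(\tilde{\nu}_0-\nu)$. In the second term the sum over $m$ simply cancels the factor $1/M$, leaving $\frac{1}{N}\sum_n(\psi_n-|\zeta_n|^2)$. Multiplying by $\sigma_\alpha^2$ gives the coherent term and $P_{\mathrm{var}}^{\mathrm{RF}}$.

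\textbf{Noise term.} We use $\mathbb{E}\{z_{n,m}z_{n',m'}^{\ast}\}=\sigma_z^2\delta_{nn'}\delta_{mm'}$, with the noise independent of the symbols, so only the diagonal survives and $\mathbb{E}\{|w_{\theta}^{\mathrm{RF}}|^2\}=\frac{\sigma_z^2}{MN}\sum_{m,n}\mathbb{E}\{1/|b_{n,m}(\theta)|^2\}=\frac{\sigma_z^2}{N}\sum_n\xi_n(\theta)$, which is $P_{\mathrm{noise}}^{\mathrm{RF}}$.

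The main obstacle is not algebraic, since no fourth-order identity is needed here, unlike in the MF case. The delicate point is justifying these manipulations: the ratio $r_{n,m}$ may be unbounded, so we must check that interchanging expectation with the finite sums and applying the independence factorization are legitimate. Finiteness of $\xi_n(\theta)$ ensures this for the discrete constellation, because then $\mathbf{g}_n^{H}(\theta)\mathbf{s}\neq0$ on every support point. The product $z_{n,m}/b_{n,m}$ is handled by conditioning on the symbols before taking the expectation over the noise.
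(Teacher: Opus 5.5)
Your proposal is correct and follows essentially the same route as the paper. The paper also uses independence of $r_{n,m}(\theta_0,\theta)$ across resource elements, the same diagonal/off-diagonal split, substitution of $\zeta_n$ and $\psi_n$, and a diagonal-only evaluation of the reciprocal-filtered noise term. Your explicit treatment of the cross term, conditioning on the symbols, and finiteness of the ratio moments only spells out steps the paper leaves implicit.
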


\noindent \emph{Proof:}
Since $\mathbf{g}_n(\theta)$ depends only on the subcarrier index and the data symbols are independent over resource elements, the variables $\{r_{n,m}(\theta_0,\theta)\}$ are independent across $(n,m)$. Applying the same diagonal/off-diagonal decomposition as in the MF analysis gives
\begin{equation}\label{eq:rf_gamma_step1}
    \begin{aligned}
         & \mathbb{E}\bigl\{|\Gamma_{\theta}^{\mathrm{RF}}(l,\nu)|^2\bigr\} = \frac{1}{MN} \biggl| \sum_{m,n} \mathbb{E}\bigl\{r_{n,m}(\theta_0,\theta)\bigr\} e^{\jmath \phi_{n,m}} \biggr|^2 \\
         & ~~ + \!\frac{1}{MN}\! \sum_{m,n} \Bigl( \mathbb{E}\bigl\{|r_{n,m}(\theta_0,\theta)|^2\bigr\} \!-\! \bigl| \mathbb{E}\bigl\{r_{n,m}(\theta_0,\theta)\bigr\} \bigr|^2 \Bigr). \!\!
    \end{aligned}
\end{equation}
Substituting \eqref{eq:zeta_discrete} and \eqref{eq:psi_discrete} into \eqref{eq:rf_gamma_step1}, we obtain
\begin{equation}\label{eq:rf_gamma_second}
    \begin{aligned}
        \mathbb{E}\bigl\{|\Gamma_{\theta}^{\mathrm{RF}}(l,\nu)|^2\bigr\}
         & = \frac{1}{MN} \biggl| \sum_{m=0}^{M-1} \sum_{n=0}^{N-1} \zeta_n(\theta_0,\theta) e^{\jmath \phi_{n,m}} \biggr|^2     \\
         & \quad + \! \frac{1}{N} \!\sum_{n=0}^{N-1}\! \Bigl( \psi_n(\theta_0,\theta) \!-\! |\zeta_n(\theta_0,\theta)|^2 \Bigr).
    \end{aligned}
\end{equation}
Combining \eqref{eq:rf_decomp} and \eqref{eq:rf_gamma_second}, and using $\mathbb{E}\{|w_{\theta}^{\mathrm{RF}}(l,\nu)|^2\}=P_{\mathrm{noise}}^{\mathrm{RF}}(\theta)$, gives \eqref{eq:rf_second} and \eqref{eq:rf_terms}, completing the proof. \hfill $\blacksquare$
\smallskip

Compared with MF, RF replaces the beamformed-illumination product $u_{n,m}(\theta_0,\theta)$ by the ratio $r_{n,m}(\theta_0,\theta)$. Hence, $P_{\rm var}^{\rm RF}(\theta_0,\theta)$ is a ratio-variance floor, since $\psi_n(\theta_0,\theta)-|\zeta_n(\theta_0,\theta)|^2$ is the variance of $r_{n,m}(\theta_0,\theta)$ on subcarrier $n$. The term $P_{\rm noise}^{\rm RF}(\theta)$ is the reciprocal-filtered noise floor and is governed by the reciprocal-power moment $\xi_n(\theta)$, which becomes large when the hypothesized illumination $|b_{n,m}(\theta)|$ can be close to zero.

\subsection{Delay-Doppler Sensing Performance Analysis}
We now evaluate the RDM moments at the matched angular hypothesis $\theta=\theta_0$, which characterizes the delay-Doppler visibility of the target when the correct transmit-side direction is used. The mainlobe level is evaluated at the matched delay-Doppler point $(l,\nu)=(\tilde l_0,\tilde\nu_0)$. The background floor refers to the delay-Doppler-independent stochastic floor induced by data randomness and receiver noise, excluding deterministic Dirichlet sidelobes.

For MF, the mainlobe level can be expressed as
\begin{equation}\label{eq:mf_peak}
    P_{\mathrm{ml}}^{\mathrm{MF}} = \frac{\sigma_{\alpha}^{2} M}{N} \biggl| \sum_{n=0}^{N-1} \rho_n(\theta_0) \biggr|^2 + P_{\mathrm{floor}}^{\mathrm{MF}},
\end{equation}
and the background floor is
\begin{equation}\label{eq:mf_floor_matched}
    P_{\mathrm{floor}}^{\mathrm{MF}} = P_{\mathrm{stream}}^{\mathrm{MF}}(\theta_0,\theta_0) + P_{\mathrm{mod}}^{\mathrm{MF}}(\theta_0,\theta_0) + P_{\mathrm{noise}}^{\mathrm{MF}}(\theta_0).
\end{equation}
Accordingly, the MF DR can be written as
\begin{equation}\label{eq:dr_mf}
    \mathrm{DR}_{\mathrm{MF}} \triangleq \frac{P_{\mathrm{ml}}^{\mathrm{MF}}}{P_{\mathrm{floor}}^{\mathrm{MF}}} = 1 + \frac{\sigma_{\alpha}^{2} M \bigl| \sum_{n=0}^{N-1} \rho_n(\theta_0) \bigr|^2}{N P_{\mathrm{floor}}^{\mathrm{MF}}}.
\end{equation}
For fixed beamforming and noise power, $P_{\mathrm{floor}}^{\mathrm{MF}}$ is monotonically non-decreasing in $\mu_4$ because $\kappa_n(\theta_0)\ge 0$. Thus, constant-modulus constellations minimize the modulation-dependent MF floor, whereas constellations with larger fourth-order moments increase it. In particular, among commonly considered PSK/QAM/Gaussian signaling models, PSK yields the smallest modulation-dependent MF contribution, while circular Gaussian signaling yields a larger one.

For RF, at the matched angle $\theta = \theta_0$, one has $\zeta_n(\theta_0,\theta_0) = 1$ and $\psi_n(\theta_0,\theta_0) = 1$, so that the ratio-variance contribution vanishes. The RF mainlobe level is given by
\begin{equation}\label{eq:rf_peak}
    P_{\mathrm{ml}}^{\mathrm{RF}} = \sigma_{\alpha}^{2} MN + P_{\mathrm{floor}}^{\mathrm{RF}},
\end{equation}
and the RF background floor is
\begin{equation}\label{eq:rf_floor_matched}
    P_{\mathrm{floor}}^{\mathrm{RF}} = \frac{\sigma_z^2}{N} \sum_{n=0}^{N-1} \xi_n(\theta_0).
\end{equation}
The corresponding RF DR is
\begin{equation}\label{eq:dr_rf}
    \mathrm{DR}_{\mathrm{RF}} \triangleq \frac{P_{\mathrm{ml}}^{\mathrm{RF}}}{P_{\mathrm{floor}}^{\mathrm{RF}}} = 1 + \frac{\sigma_{\alpha}^{2} MN}{\frac{\sigma_z^2}{N} \sum_{n=0}^{N-1} \xi_n(\theta_0)}.
\end{equation}

\begin{remark}
    Constant-modulus symbols do not imply constant target illumination in MIMO-OFDM ISAC systems. Even when every stream uses a constant-modulus constellation, the effective illumination seen by the target is $b_{n,m}(\theta_0) = \sum_{k=1}^{K}\beta_{n,k}(\theta_0)s_{n,m,k}$, which is a random beam-domain phasor sum. Its magnitude depends not only on the symbol phases, but also on the user-target angular geometry through the beam gains $\{\beta_{n,k}(\theta_0)\}$. Therefore, constant-modulus signaling removes the modulation-dependent MF term $P_{\mathrm{mod}}^{\mathrm{MF}}$, but in general does not remove the multi-stream MF floor $P_{\mathrm{stream}}^{\mathrm{MF}}(\theta_0,\theta_0)$. Meanwhile, RF eliminates the matched-angle ratio-variance floor, but its noise floor is governed by the reciprocal-power moment $\xi_n(\theta_0)=\mathbb{E}\{|b_{n,m}(\theta_0)|^{-2}\}$, which is dominated by near-zero illumination events. Hence, the decisive quantity is not the modulus of each data symbol, but the distribution of the beamformed illumination at the target. MF is limited by illumination fluctuations, whereas RF is limited by the lower tail of the illumination power.
    \hfill $\blacksquare$
\end{remark}
\smallskip

For arbitrary subcarrier-dependent beamforming matrices $\{\mathbf{W}_n\}$, \eqref{eq:dr_mf} and \eqref{eq:dr_rf} characterize the DRs. Further insight can be obtained in the common-beamforming case, where $\mathbf{W}_n=\mathbf{W}$ for all $n$. Then $\rho_n(\cdot)=\rho(\cdot)$, $\kappa_n(\cdot)=\kappa(\cdot)$, and $\xi_n(\cdot)=\xi(\cdot)$ are independent of $n$. Define the sensing SNR as $\gamma_{\mathrm{s}}\triangleq\sigma_{\alpha}^{2}/\sigma_z^2$ and the MF signal-induced floor coefficient as
\begin{equation}\label{eq:S_mf}
    S_{\mathrm{MF}}(\theta_0) \triangleq \rho^2(\theta_0) - \kappa(\theta_0) + (\mu_4 - 1)\kappa(\theta_0).
\end{equation}
The DR expressions in \eqref{eq:dr_mf} and \eqref{eq:dr_rf} then reduce to
\begin{subequations}\label{eq:dr_closed}
    \begin{align}
        \mathrm{DR}_{\mathrm{MF}} & = 1 + \frac{\sigma_{\alpha}^2 M N \rho^2(\theta_0)}{\sigma_{\alpha}^2 S_{\mathrm{MF}}(\theta_0) + \sigma_z^2 \rho(\theta_0)}, \label{eq:dr_closed_mf} \\
        \mathrm{DR}_{\mathrm{RF}} & = 1 + \gamma_{\mathrm{s}} \frac{M N}{\xi(\theta_0)}. \label{eq:dr_closed_rf}
    \end{align}
\end{subequations}
The MF expression depends on both coherent illumination and signal-induced self-interference, whereas the RF expression is governed only by the reciprocal-noise factor at the matched angle.
When $S_{\mathrm{MF}}(\theta_0)>0$, comparing \eqref{eq:dr_closed_mf} and \eqref{eq:dr_closed_rf} gives
\begin{equation}\label{eq:dr_compare}
    \mathrm{DR}_{\mathrm{RF}}\ge \mathrm{DR}_{\mathrm{MF}} \Longleftrightarrow \gamma_{\mathrm{s}} \ge \frac{\rho^2(\theta_0)\bigl(\xi(\theta_0)-1/\rho(\theta_0)\bigr)}{S_{\mathrm{MF}}(\theta_0)}.
\end{equation}
The right-hand side is nonnegative because
\begin{equation}
    \xi(\theta_0) = \mathbb{E}\{|b_{n,m}(\theta_0)|^{-2}\} \ge \frac{1}{\mathbb{E}\{|b_{n,m}(\theta_0)|^2\}} = \frac{1}{\rho(\theta_0)},
\end{equation}
where the inequality follows from Jensen's inequality. If $S_{\mathrm{MF}}(\theta_0)=0$, the crossing condition in \eqref{eq:dr_compare} is not applicable, and the two DRs should be compared directly using \eqref{eq:dr_closed}. The single-effective-stream constant-modulus case is one such degenerate case.

\begin{table}[!t]
    \footnotesize
    \centering
    \caption{Simulation Parameters} \label{Tab:parameters}
    \vspace{-2mm}
    \begin{tabular}{ccc}
        \toprule
        \textbf{Parameter}            & \textbf{Symbol}  & \textbf{Value} \\
        \midrule
        Carrier frequency             & $f_{\text{c}}$   & 28~GHz         \\
        Subcarrier spacing            & $\Delta_f$       & 120~kHz        \\
        Number of antennas            & $N_{\mathrm{t}}$ & 16             \\
        Number of subcarriers         & $N$              & 256            \\
        Number of OFDM symbols        & $M$              & 128            \\
        Number of communication users & $K$              & 4              \\
        Target range                  & $R_0$            & 200~m          \\
        Target velocity               & $v_0$            & 9.4~m/s        \\
        Receiver noise PSD            & $N_0$            & -174~dBm/Hz    \\
        \bottomrule
    \end{tabular} \vspace{-0.2 cm}
\end{table}

The high-SNR behavior follows from \eqref{eq:dr_closed_mf}. When $\sigma_{\alpha}^2 \gg \sigma_z^2$, the MF DR approaches the finite ceiling
\begin{equation}\label{eq:dr_high_snr}
    \mathrm{DR}_{\mathrm{MF}} \to 1 + \frac{M N \rho^2(\theta_0)}{S_{\mathrm{MF}}(\theta_0)}.
\end{equation}
This saturation occurs because both the coherent MF mainlobe and the signal-induced MF floor scale with $\sigma_{\alpha}^{2}$. In contrast, the RF floor is noise-driven; hence, as long as $\xi(\theta_0)$ is finite, $\mathrm{DR}_{\mathrm{RF}}$ continues to increase linearly with $\gamma_{\mathrm{s}}$.

In the noise-dominated regime, i.e., $\sigma_{\alpha}^2 \ll \sigma_z^2$, \eqref{eq:dr_closed} gives
\begin{equation}\label{eq:dr_low_snr}
    \mathrm{DR}_{\mathrm{MF}} \approx 1 + \gamma_{\mathrm{s}} M N \rho(\theta_0),  \qquad \mathrm{DR}_{\mathrm{RF}} = 1 + \gamma_{\mathrm{s}} \frac{M N}{\xi(\theta_0)}.
\end{equation}
The corresponding initial slopes are
\begin{equation}\label{eq:dr_slopes}
    \left. \frac{\partial \mathrm{DR}_{\mathrm{MF}}}{\partial \gamma_{\mathrm{s}}} \right|_{\gamma_{\mathrm{s}} = 0} = M N \rho(\theta_0),  \qquad \frac{\partial \mathrm{DR}_{\mathrm{RF}}}{\partial \gamma_{\mathrm{s}}} = \frac{M N}{\xi(\theta_0)}.
\end{equation}
Since $\xi(\theta_0)\ge 1/\rho(\theta_0)$, the initial DR slope under MF is no smaller than that under RF. Thus, at low sensing SNR, RF does not yet benefit from eliminating the signal-induced floor but already pays the reciprocal-noise penalty.

\begin{remark}
    The common-beamforming expressions reveal that the MF/RF tradeoff is governed by three statistics of the same random illumination $b_{n,m}(\theta_0)$. The average illumination power $\rho(\theta_0)$ determines the coherent MF gain and the low-SNR MF slope. The coefficient $S_{\mathrm{MF}}(\theta_0)$ measures the signal-induced MF floor caused by multi-stream superposition and modulation randomness. The reciprocal-power moment $\xi(\theta_0)$ measures the sensitivity of RF to the lower tail of the illumination power. Hence, MF is more robust in the noise-dominated regime and under weak or cancellation-prone illumination, whereas RF becomes advantageous only when the target return is sufficiently strong and near-zero illumination events are rare. Since user-target angular geometry affects $\rho(\theta_0)$, $S_{\mathrm{MF}}(\theta_0)$, and $\xi(\theta_0)$ in different and generally non-monotonic ways, there is no universal MF/RF ordering over arbitrary geometries.
    \hfill $\blacksquare$
\end{remark}

\section{Numerical Results}
Table~\ref{Tab:parameters} summarizes the main simulation parameters. Unless otherwise specified, curves denote theoretical results and markers denote Monte Carlo simulations. The target angle is fixed at $\theta_0 = 10^\circ$. The transmit beams are generated by an ISAC-aware maximum-ratio transmission (MRT) rule. Specifically, the $k$-th data-stream beam is obtained by combining the MRT steering direction of the $k$-th user with the target steering direction.

\begin{figure}[!t]
    \centering
    \includegraphics[width=0.95\columnwidth]{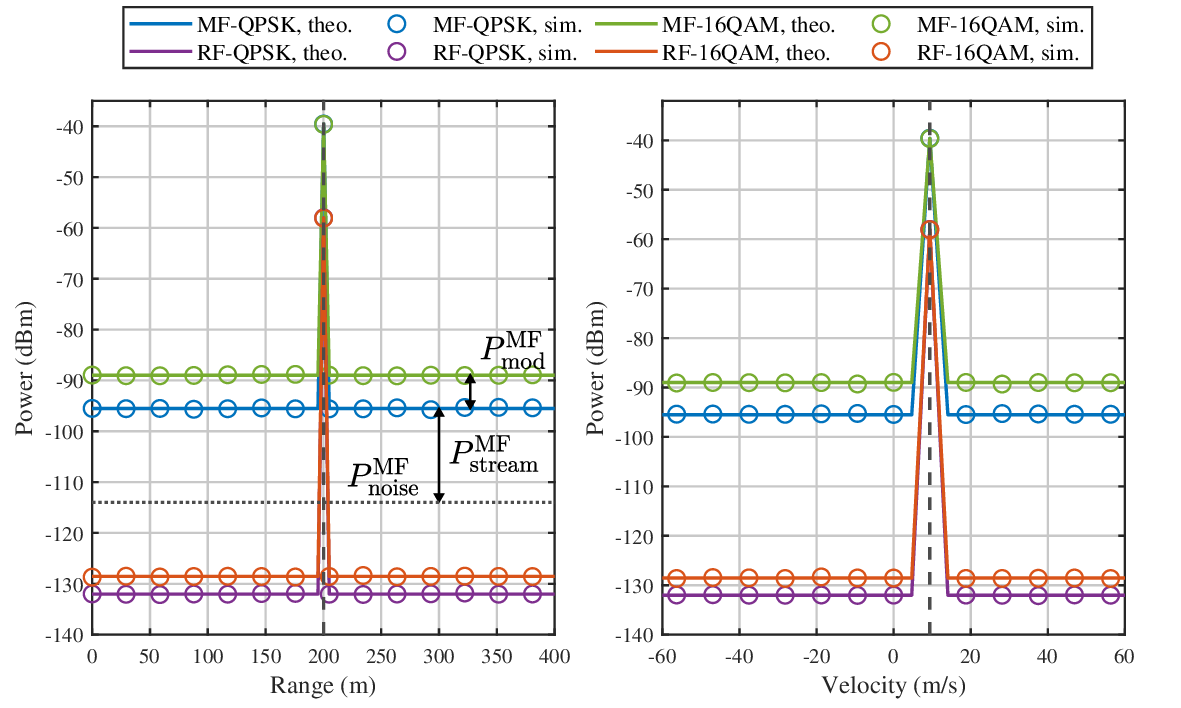}
    \vspace{-3mm}
    \caption{Range and velocity profiles with sensing SNR $\gamma_{\mathrm{s}}=20$ dB.}
    \vspace{-3mm}
    \label{fig:rdm_profile}
\end{figure}

Fig.~\ref{fig:rdm_profile} validates the derived second-order moment expressions by comparing the theoretical and simulated range/velocity profiles. The close agreement between the curves and markers confirms the accuracy of the analytical RDM characterization. The figure also illustrates the multi-stream effect highlighted in Remark~1. In contrast to single-antenna OFDM-ISAC, the MF-based RDM in the multi-antenna setting exhibits an additional pedestal caused by the multi-stream contribution $P_{\mathrm{stream}}^{\mathrm{MF}}$. Hence, even with QPSK signaling, the MF sidelobe floor is visibly elevated. For 16QAM, the floor further increases due to the additional modulation-dependent contribution $P_{\mathrm{mod}}^{\mathrm{MF}}$. By comparison, at the matched angle, RF removes the data-induced fluctuation, so its background level is mainly governed by the RF noise contribution.

\begin{figure}[!t]
    \centering
    \includegraphics[width=0.8\columnwidth]{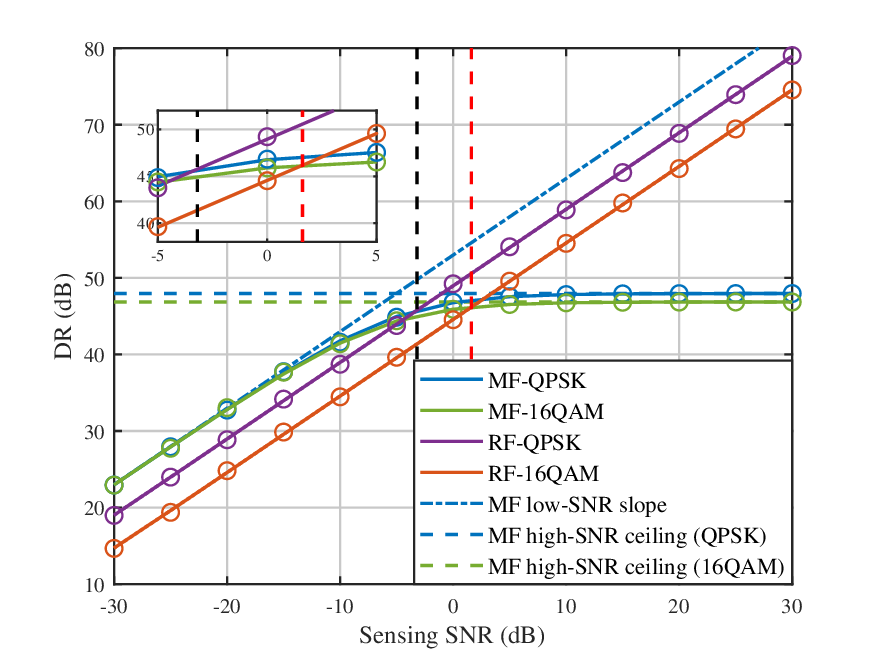}
    \vspace{-3mm}
    \caption{DR versus sensing SNR under MF and RF for QPSK and 16QAM constellations. The black and red vertical dashed lines mark the MF/RF crossing SNRs obtained by \eqref{eq:dr_compare} for QPSK and 16QAM, respectively.}
    \label{fig:dr_snr}\vspace{-4mm}
\end{figure}

\begin{figure}[!t]
    \centering
    \includegraphics[width=0.8\columnwidth]{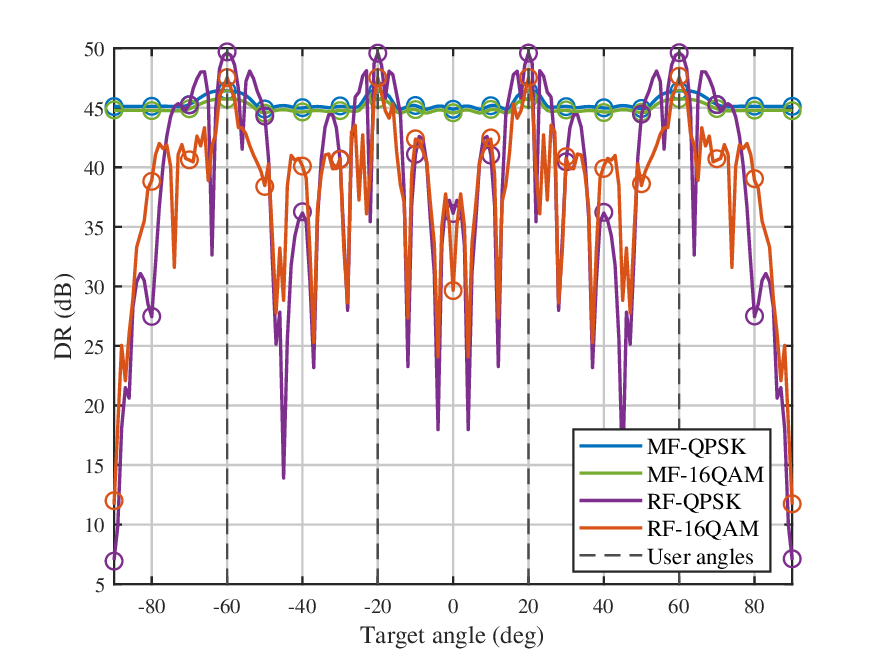}
    \vspace{-3mm}
    \caption{DR versus target angle under MF and RF for QPSK and 16QAM constellations, where the sensing SNR $\gamma_{\mathrm{s}}=0$~dB.}
    \label{fig:dr_angle}\vspace{-4mm}
\end{figure}

Fig.~\ref{fig:dr_snr} shows the DR as a function of the sensing SNR. The predicted crossing SNRs are $-3.2$~dB for QPSK and $1.6$~dB for 16QAM, which match the simulated transitions. In the high-SNR regime, the MF curves approach the analytical ceiling because both the coherent mainlobe and the MF floor scale with $\sigma_{\alpha}^{2}$. By contrast, the RF curves continue to increase with the sensing SNR, since the matched-angle RF floor remains noise-limited. In the low-SNR regime, the MF curves follow the analytical slope, which confirms that the initial DR growth is controlled by the coherent illumination power $\rho(\theta_0)$. More importantly, this figure shows a clear departure from single-antenna OFDM-ISAC: even under QPSK, MF and RF no longer yield identical DRs. This is because the multi-stream contribution $P_{\mathrm{stream}}^{\mathrm{MF}}$ remains nonzero under MF, whereas RF is still shaped by the reciprocal-power moment $\xi(\theta_0)$ through the RF noise contribution.

Fig.~\ref{fig:dr_angle} illustrates the impact of the user-target angular geometry. The $K=4$ user angles are fixed at $\pm60^\circ$ and $\pm20^\circ$, while the target angle varies over $[-90^\circ, 90^\circ]$. The MF curves vary relatively smoothly with the target angle, indicating that MF is comparatively robust to geometry changes. In contrast, the RF curves fluctuate much more strongly, which confirms the pronounced geometry sensitivity induced by $\xi(\theta_0)$. When the target direction is weakly illuminated by the transmit beams, small values of $|b_{n,m}(\theta_0)|$ occur more frequently. This sharply increases $\xi(\theta_0)$ and causes severe DR degradation under RF. An interesting observation is that RF-QPSK can even perform worse than RF-16QAM for some target angles. This does not contradict the MF ordering with respect to $\mu_4$, because RF is not determined by $\mu_4$ alone. Instead, RF depends on the full distribution of the beamformed illumination. For certain geometries, the superposition of constant-modulus QPSK phasors can create more severe near-zero illumination events than 16QAM, resulting in a larger reciprocal-power moment and hence a lower DR.

\section{Conclusion}
This paper analyzed the delay-Doppler sensing performance of a MIMO-OFDM ISAC system with multi-user beamformed data transmissions. Closed-form RDM second-order moment expressions were derived for MF and RF, and the corresponding DRs were characterized. The analysis showed that MF is limited by multi-stream-induced, modulation-dependent, and receiver-noise contributions, whereas RF removes the matched-angle data-induced floor but becomes sensitive to reciprocal-power statistics of illumination. These results reveal that user-target angular geometry directly shapes delay-Doppler target visibility in MIMO-OFDM ISAC systems. Numerical results validated the analysis and showed that MF is comparatively robust, while RF can be highly sensitive to weak or cancellation-prone target illumination.

\end{document}